\def\BibTeX{{\rm B\kern-.05em{\sc i\kern-.025em b}\kern-.08em
    T\kern-.1667em\lower.7ex\hbox{E}\kern-.125emX}}
\theoremstyle{plain} 
\newtheorem{theorem}{Theorem}[section]     
\newtheorem{corollary}[theorem]{Corollary} 
\newtheorem{lemma}[theorem]{Lemma}         
\newtheorem{proposition}[theorem]{Proposition} 
\theoremstyle{definition} 
\newtheorem{definition}[theorem]{Definition}
\newtheorem{remark}[theorem]{Remark}
\newtheorem{problem}[theorem]{Problem}
\newtheorem{assumption}{Assumption}
\begin{document}

\title{A Novel Discrete-Time Model of Information Diffusion on Social Networks Considering Users Behavior}

\author{\IEEEauthorblockN{1\textsuperscript{st} Tran Van Khanh}
	\IEEEauthorblockA{\textit{Faculty of Information Technology} \\
		\textit{Posts and Telecommunications}\\ \textit{Institute of Technology}\\
		Hanoi, Vietnam \\
		khanh080605@gmail.com}
	\and
	\IEEEauthorblockN{2\textsuperscript{nd} Do Xuan Cho}
	\IEEEauthorblockA{\textit{Faculty of Information Security} \\
		\textit{Posts and Telecommunications}\\ \textit{Institute of Technology}\\
		Hanoi, Vietnam \\
		chodx@ptit.edu.vn}
	\and
	\IEEEauthorblockN{3\textsuperscript{rd} Hoang Phi Dung}
	\IEEEauthorblockA{\textit{Faculty of Fundamental Sciences} \\
		\textit{Posts and Telecommunications}\\ \textit{Institute of Technology}\\
		Hanoi, Vietnam \\
		dunghp@ptit.edu.vn}

}

\maketitle

\begin{abstract}
In this paper, we introduce the SDIR (Susceptible–Delayable–Infected–Recovered) model, an extension of the classical SIR epidemic framework, to provide a more explicit characterization of user behavior in online social networks. The newly merged state D (delayable) represents users who have received the information but delayed its spreading and may eventually choose not to share it at all. Based on the mean-field approximation method, we derive the dynamical equations of the model and investigate its convergence and stability conditions. Under these conditions, we further propose a greedy algorithm and a sandwich approximation algorithm for the edge-deletion problem, aiming to minimize the influence of information diffusion by identifying approximate solutions.
\end{abstract}

\begin{IEEEkeywords}
SIR epidemic, social networks, complex networks, Markov chains, discrete optimization, edge deletion, mean-field approximation
\end{IEEEkeywords}

\section{Introduction}
Online social networks have become one of the most crucial and essential information platforms for communication and commerce on a global scale. Because of their highly complex data structures, information spreading in social networks has emerged as an ideal environment for propagation—deep, wide, and significantly faster than any previous medium. Studying information diffusion on platforms such as Facebook, Twitter, and TikTok plays an essential role in communication media, information security, and the social sciences \cite{Chen2014, Kempe2015, Richardson2002, Shakarian2015}. In these platforms, the influence maximization (IM) problem \cite{Canh2018, Chen2014, Kempe2015, Shakarian2015} has been recognized as a fundamental problem in viral marketing, while the influence minimization (IMIN) problem \cite{Canh2019, Kimura2008, Shakarian2015, Xie2023} is central to controlling the spread of harmful or false information in online social networks.

Epidemic models have long been studied using mathematical formulations, for instance, the Susceptible-Infected-Recovered (SIR) model is used for epidemic forecasting in epidemiology \cite{Kermack1927}. Initially, epidemic models such as SIR and SIS (Susceptible-Infected-Susceptible) were applied to epidemiology for disease forecasting \cite{Bailey1975, Kermack1927}. Nowadays, however, these models are also widely applied to other fields, including viral advertising \cite{Phelps2004, Richardson2002}, cybersecurity \cite{Acemoglu2013, Alpcan2010}, and information diffusion \cite{Jacquet2010, Mieghem2009}. Models of information propagation are generally categorized into stochastic and deterministic approaches, with several comprehensive surveys available \cite{Nowzari2016, Pastor-Satorras2015}. For example, Wang et al. \cite{Wang2003} introduced a discrete-time virus spread model and evaluated the epidemic threshold as a function of network structure. Mieghem et al. \cite{Mieghem2009, Mieghem2014} applied Markov chain formulations to analyze biological diffusion models such as SIR, and subsequent works explored SIR variants, including SIS and SIRS \cite{Ahn2013, Ruhi2015}. Through mean-field approximation, stochastic SIR models can be transformed into deterministic and discrete forms, enabling linearization and tractable analysis \cite{Gracy2020, Mieghem2009, Pare2020, Yi2022, Youssef2011}. More recently, Yi et al. \cite{Yi2022} employed mean-field approximations to study discrete SIR models, reformulated them into matrix-based dynamical systems, and studied influence minimization via edge deletion, along with convergence properties of the resulting models.

At present, personalization of user experience has become a dominant trend and a primary objective of online social networks, especially with the increasing use of artificial intelligence algorithms \cite{Silva2016} to enhance user engagement on platforms such as Facebook, Twitter, YouTube and TikTok. Consequently, the study of users' behavior has become particularly relevant in this context \cite{Youssef2011}.

In this paper, we present a novel discrete-time SDIR model, a new extension of the classical discrete SIR framework. The SDIR model incorporates an intermediate state, D (Delayable), between S and I. This new state captures scenarios where a node that receives information does not immediately become “infected,” but may instead exhibit a delay before spreading the information-or may ultimately choose not to spread it at all. This behavioral shade reflects realistic interactions in online social networks, where users exercise discretion in processing, accepting, and sharing information. Each seed node corresponds to a social media account (e.g., a Facebook/Twitter/YouTube/TikTok user account or fan page) that initiates diffusion to its followers at the beginning. In reality, many users either ignore their received information or share it only after a delay in judgment. User procrastination on online social networks has been examined as an inherent behavioral factor in the use of such platforms \cite{Barabasi2005, Qi2018}. By introducing the Delayable state, the SDIR model offers a more accurate representation of individual-level behavior in information diffusion.

Our paper focus on discrete-time SDIR model by using the mean-field approximation method to transform the model SDIR to deterministic SDIR model. Using some techniques in linear algebra and spectral matrix theory, we study optimization problems that minimize the number of infections in SDIR Markov chain model on a network. By using some technique assumptions, we give the sufficient condition for the convergence and the stability of our model. Moreover, we investigate the change of the number of infections after deleting edges. We propose the an efficient modified greedy algorithm based on \cite{Yi2022} and Sandwich algorithm based on Sandwich framework \cite{Lu2015, Wang2017} for minimize infections. 

\section{Proposed Model}

The proposed SDIR model is extended from the SIR model to better suit the application of simulating information diffusion on social networks.
\begin{figure}
	\begin{center}
		\begin{tikzpicture}
			\node[draw, circle, minimum size=0.4cm] at (-2.6,0) (S) {$S_i$};
			\node[draw, circle, minimum size=0.4cm] at (0,-0.6) (I) {$I_i$};
			\node[draw, circle, minimum size=0.4cm] at (2.6,0) (R) {$R_i$};
			\node[draw, circle, minimum size=0.4cm] at (0,0.6) (D) {$D_i$};
			
			\draw[->] (S) -- (I) node[midway, left, below =0.2pt] {$\alpha_i,\beta_{ij}$};
			\draw[->] (I) -- (R) node[midway, left, below=0.2pt] {$\delta_i$};
			\draw[->] (S) -- (D) node[midway, left, above=0.25pt] {$1-\alpha_i$};
			\draw[->] (D) -- (I) node[midway, left] {$\omega_i$};
			\draw[->] (D) -- (R) node[midway, above] {$\delta'_i$};
		\end{tikzpicture}
		\caption{The SDIR model}
		\label{SDIR}
	\end{center}
\end{figure}
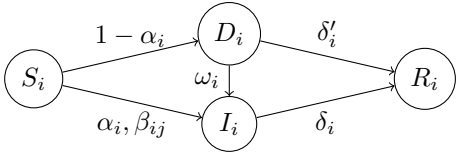
A key characteristic of the SDIR model is the introduction of state D (Delayable) which means that a node can be delayed in its infection process. More specifically, consider a directed graph $G(V,E)$ with $|V|=n$, where each node represents a user and can be in one of four states: \textit{Susceptible} (S), \textit{Delayable} (D), \textit{Infected} (I), or \textit{Recovered} (R). Notably, Delayable means that a delayed state after being exposed to information but not immediately spreading it, and more importantly, it is still considered as an infected state. It is evident that $S_i(t)+I_i(t)+D_i(t)+R_i(t)=1$. At time $t$, a node $i$ in state S can be infected by an adjacent node $j$ in state I that has an edge pointing to $i$ with probability $\beta_{ij}(t)$. If successful, node $i$ will transition to one of two states: state I with probability $\alpha_i(t)$ or state D with probability $1-\alpha_i(t)$. Furthermore, if a node $i$ is in state D at time $t$, it may transition to another state or remain in its current state according to the following rule: A real number $p \in [0,1]$ is randomly chosen following a uniform distribution. Then, if $p<\omega_i(t)$, it transitions to state I; if $\omega_i(t)\leq p < \omega_i(t)+\delta'_i(t)$, it transitions to state R; otherwise, it remains in its current state D. When node $i$ is infected at time $t$, it also heals with rate $\delta_i(t)$. Before establishing the equations for the model, we introduce the following assumptions.
\begin{assumption}\label{as1}
	$\delta_i(t) \leq \delta'_i(t)$, $\forall i\in\overline{1,n},t\geq 0.$
\end{assumption}

\begin{assumption}\label{as2}
	$\alpha_i(t)\in(0,1]$, $\forall i=\overline{1,n},t\geq 0$.
\end{assumption}
The set of inequalities in Assumption \ref{as1} is derived from the observation that during the process of information diffusion and reception, users tend to ignore or quickly forget information (transition to state R) if they do not share or interact immediately.

Next, for every node $i \in {1,2,...,n}$, we have:
\begin{align}
	I_i(t+1) = &S_i(t)\alpha_i(t)\Biggl(1-\prod_{j=1}^n \bigl(1-\beta_{ij}(t)I_j(t)\bigr)\Biggr) \notag\\
	&+ \omega_i(t)D_i(t) + (1-\delta_i(t))I_i(t) \\[0.3em]
	D_i(t+1) = &S_i(t)(1-\alpha_i(t))\Biggl(1-\prod_{j=1}^n \bigl(1-\beta_{ij}(t)I_j(t)\bigr)\Biggr)\notag\\
	&+ (1-\omega_i(t)-\delta'_i(t))D_i(t) \\[0.3em]
	R_i(t+1) = &\delta_i(t)I_i(t)+\delta'_i(t)D_i(t)+R_i(t)
\end{align}
By evaluating $S_i(t) \leq S_i(0)$, taking the expectation on both sides of each equation, and then linearizing the model using the mean-field approximation and the approximation formula $e^x \approx x+1$ when $x\rightarrow 0$ for the quantity $(1-\prod_{j=1}^n(1-\beta_{ij}(t)I_j(t)))$ under the assumption that $\beta_{ij}(t)$ are independent for every pair $i,j$, and the coefficients $\beta_{ij}(t)$, $\delta_i(t)$, $\alpha_i(t)$, $\omega_i(t)$, $\delta'_i(t)$ are independent and identically distributed, we obtain the following system:
\begin{align}
	\bm{x}(t+1) &= (\bm{I}-\bm{D}+\bm{A}\bm{S}(0)\bm{B})\bm{x}(t) \notag \\
	& + \bm{W}\bm{y}(t) \\[0.3em]
	\bm{y}(t+1) &= (\bm{I}-\bm{A})\bm{S}(0)\bm{B}\bm{x}(t) \notag \\
	& + (\bm{I}-\bm{W}-\bm{D'})\bm{y}(t) \\[0.3em]
	\bm{r}(t+1) &= \bm{D}\bm{x}(t)+\bm{D'}\bm{y}(t)+\bm{r}(t)
\end{align}

Here, $\bm{x}(t)$, $\bm{y}(t)$, $\bm{r}(t)$ are vectors whose $i$-th element takes the value $\mathbb{E}[I_i(t)]$, $\mathbb{E}[D_i(t)]$, $\mathbb{E}[R_i(t)]$, respectively. $\bm{A}$, $\bm{W}$, $\bm{D}$, $\bm{D'}$ are diagonal matrices whose $i$-th diagonal element takes the value $\mathbb{E}[\alpha_i(t)]$, $\mathbb{E}[\omega_i(t)]$, $\mathbb{E}[\delta_i(t)]$, $\mathbb{E}[\delta'_i(t)]$, respectively. $\bm{S}(0)$ is a diagonal matrix whose $i$-th diagonal element is $1-x_i(0)-y_i(0)-r_i(0)$. The matrix $\bm{B}$ consists of elements $B_{ij}=\mathbb{E}[\beta_{ij}(t)]$. We also assume that $\sum_{j=1}^nB_{ij}<1$, $\forall i=1,2,\cdots,n$.

\section{Sufficient Condition for Global Convergence}
One difference between the SDIR model and the SEIR model in \cite{Pastor-Satorras2015} is that the delayed state in the SDIR model is still considered an infected state, while the SEIR model only considers E as an exposed state. Therefore, when considering global stability in the SDIR model, we need to find a sufficient condition for $\bm{x}(t)$ and $\bm{y}(t)$ to both approach the zero state. Consider a vector $\bm{q}\in(0,1]^{n}$ which has $n$ elements $q_1$, $q_2$,$\cdots$, $q_n$, then define $$\bm{C}(\bm{q})\coloneqq \text{diag}(\min(D_i,D'_i+(1-1/q_i)W_i)),$$ $$\bm{M}(\bm{q}) \coloneqq \bm{I}-\bm{C}(\bm{q})+(\bm{A}+\bm{Q}(\bm{I}-\bm{A}))\bm{S}(0)\bm{B},$$ where $\bm{Q}=\text{diag}(\bm{q})$. Moreover, if there is no confusion, we convention to write $\bm{M}(\bm{q})$ as $\bm{M}$. Recall that the sufficient condition for the convergence of the SIR model is $\rho(\bm{M}_{\text{SIR}})<1$, where $\bm{M}_{\text{SIR}}=\bm{I}-\bm{D}+\bm{S}(0)\bm{B}$. Under Assumption \ref{as1}, we can give a convergence condition that is better than the SIR model and is stated in the following theorem.

\begin{theorem}\label{convergence}
	Under Assumption \ref{as1}, there always exists a choice of vector $\bm{q}\in(0,1]^{n}$ such that $\rho(\bm{M})\leq\rho(\bm{M}_{\text{SIR}})$. Moreover, if $\rho(\bm{M})<1$ then $\bm{x}(t)$ and $\bm{y}(t)$ converge to the $\bm{0}_{n\times1}$.
\end{theorem}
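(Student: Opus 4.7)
The plan is to dispatch the two claims separately, both by reducing to a scalar non-negative linear comparison governed by $\bm{M}(\bm{q})$.

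For the first (existential) claim, I would test the extreme choice $\bm{q}=\bm{1}_{n}$: then $\bm{Q}=\bm{I}$, so $\bm{A}+\bm{Q}(\bm{I}-\bm{A})=\bm{I}$, and the factor $1-1/q_{i}$ vanishes, so by Assumption~\ref{as1} one has $\min(D_{i},D_{i}')=D_{i}$ and hence $\bm{C}(\bm{1})=\bm{D}$. Substitution yields $\bm{M}(\bm{1})=\bm{I}-\bm{D}+\bm{S}(0)\bm{B}=\bm{M}_{\text{SIR}}$, so the required inequality $\rho(\bm{M})\le\rho(\bm{M}_{\text{SIR}})$ is witnessed, with equality. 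The real content is the second claim.

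For convergence I would introduce the comparison vector $\bm{z}(t):=\bm{x}(t)+\bm{Q}\bm{y}(t)$, which is entrywise non-negative because $\bm{x}(t),\bm{y}(t)\ge\bm{0}$ follows by an easy induction from the non-negativity of the coefficient matrices in the recursions. The target is the entrywise inequality $\bm{z}(t+1)\le\bm{M}\bm{z}(t)$. Substituting the two system equations and grouping terms gives
\[
\bm{z}(t+1)=\bigl[\bm{I}-\bm{D}+(\bm{A}+\bm{Q}(\bm{I}-\bm{A}))\bm{S}(0)\bm{B}\bigr]\bm{x}(t)+\bigl[\bm{W}+\bm{Q}(\bm{I}-\bm{W}-\bm{D'})\bigr]\bm{y}(t),
\]
while $\bm{M}\bm{z}(t)=[\bm{I}-\bm{C}(\bm{q})+(\bm{A}+\bm{Q}(\bm{I}-\bm{A}))\bm{S}(0)\bm{B}](\bm{x}(t)+\bm{Q}\bm{y}(t))$. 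The $\bm{x}$-block comparison reduces to $\bm{C}(\bm{q})\le\bm{D}$, which is immediate from the first argument of the $\min$ in the definition of $\bm{C}(\bm{q})$. The $\bm{y}$-block comparison, after discarding the non-negative residue $(\bm{A}+\bm{Q}(\bm{I}-\bm{A}))\bm{S}(0)\bm{B}\bm{Q}\bm{y}(t)$ on the right, becomes the diagonal inequality $W_{i}+q_{i}(1-W_{i}-D_{i}')\le(1-C_{i}(q_{i}))q_{i}$, which after dividing by $q_{i}>0$ rearranges to $C_{i}(q_{i})\le D_{i}'+W_{i}(1-1/q_{i})$, exactly the second argument of the $\min$ defining $\bm{C}(\bm{q})$. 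Hence $\bm{z}(t+1)\le\bm{M}\bm{z}(t)$ entrywise.

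Iterating gives $\bm{0}\le\bm{z}(t)\le\bm{M}^{t}\bm{z}(0)$; since $\bm{M}$ is entrywise non-negative and $\rho(\bm{M})<1$, Gelfand's formula (or Perron--Frobenius) yields $\bm{M}^{t}\to\bm{0}$, so $\bm{z}(t)\to\bm{0}$. The positivity of the diagonal matrix $\bm{Q}$ together with $\bm{x}(t),\bm{y}(t)\ge\bm{0}$ then forces both $\bm{x}(t)\to\bm{0}$ and $\bm{y}(t)\to\bm{0}$. The main obstacle, conceptually, is guessing the right comparison vector: the weighted combination $\bm{z}=\bm{x}+\bm{Q}\bm{y}$ (rather than the unweighted sum) is what absorbs the asymmetric cross-coupling term $\bm{W}\bm{y}$ that enters $\bm{x}(t+1)$ but has no counterpart in the SIR case, and the peculiar-looking second branch $D_{i}'+W_{i}(1-1/q_{i})$ in the definition of $\bm{C}(\bm{q})$ is engineered precisely so that this comparison is tight; Assumption~\ref{as1} is then what ensures the resulting bound is no worse than the classical SIR one.
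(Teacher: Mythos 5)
Your proof is correct and follows essentially the same route as the paper's: form the weighted comparison vector $\bm{x}(t)+\bm{Q}\bm{y}(t)$, verify the entrywise bound by $\bm{M}$ via the two branches of the $\min$ defining $\bm{C}(\bm{q})$, and iterate under $\rho(\bm{M})<1$ (your sign $+(\bm{I}-\bm{Q})\bm{W}$ in the $\bm{y}$-coefficient is in fact the correct one, consistent with the definition of $\bm{C}(\bm{q})$; the paper's displayed derivation carries a sign typo there). The only cosmetic difference is the witness for the existential claim: you take $\bm{q}=\bm{1}$, giving $\bm{M}(\bm{1})=\bm{M}_{\text{SIR}}$ exactly, whereas the paper selects any $q_i\in\left[\tfrac{W_i}{W_i+D_i'-D_i},1\right]$ so that $\bm{C}=\bm{D}$ and $\bm{M}\le\bm{M}_{\text{SIR}}$ entrywise (allowing strict improvement when $D_i'>D_i$); both validly establish the stated inequality.
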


\begin{corollary}
	If there exist a vector $\bm{q}\in(0,1]^{n}$ such that $\rho(\bm{M})<1$ then $\bm{x}(t)$ and $\bm{y}(t)$ converge to the $\bm{0}_{n\times1}$.
\end{corollary}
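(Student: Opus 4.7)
The plan is to split the theorem into its two assertions and handle them independently. For the existence of $\bm{q}\in(0,1]^{n}$ with $\rho(\bm{M})\le\rho(\bm{M}_{\text{SIR}})$, I would take the simplest candidate $\bm{q}=\bm{1}$. Then $(1-1/q_i)W_i=0$, so $C_i(\bm{q})=\min(D_i,D'_i)=D_i$ by Assumption~\ref{as1}, while $\bm{A}+\bm{Q}(\bm{I}-\bm{A})=\bm{I}$. Substituting into the definition of $\bm{M}(\bm{q})$ yields $\bm{M}(\bm{1})=\bm{I}-\bm{D}+\bm{S}(0)\bm{B}=\bm{M}_{\text{SIR}}$, and hence equality of the spectral radii. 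Strict improvement is typically available by lowering some $q_i$ below $1$, which is the sense in which the SDIR condition is ``better'' than the SIR one.

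For the convergence claim, the idea is to dominate the joint nonnegative dynamics of $(\bm{x},\bm{y})$ by a single linear recursion driven by $\bm{M}(\bm{q})$. Motivated by the shape of $\bm{M}(\bm{q})$, I would introduce the auxiliary vector
\begin{equation*}
\bm{z}(t)\coloneqq\bm{x}(t)+\bm{Q}\bm{y}(t),
\end{equation*}
which is entrywise nonnegative, and prove the componentwise bound $\bm{z}(t+1)\le\bm{M}(\bm{q})\bm{z}(t)$. Expanding, $\bm{M}(\bm{q})\bm{z}(t)$ uses the coefficient $\bm{M}(\bm{q})$ for $\bm{x}(t)$ and $\bm{M}(\bm{q})\bm{Q}$ for $\bm{y}(t)$, whereas substituting the model updates and grouping terms gives $\bm{z}(t+1)$ with coefficients $\bm{I}-\bm{D}+(\bm{A}+\bm{Q}(\bm{I}-\bm{A}))\bm{S}(0)\bm{B}$ for $\bm{x}(t)$ and $\bm{W}+\bm{Q}(\bm{I}-\bm{W}-\bm{D'})$ for $\bm{y}(t)$. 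Since $\bm{x}(t),\bm{y}(t)\ge\bm{0}$, it suffices to verify each coefficient inequality entrywise.

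The $\bm{x}(t)$ coefficient inequality collapses immediately to $\bm{C}(\bm{q})\le\bm{D}$, which is built into the definition of $C_i(\bm{q})$ as a minimum against $D_i$. The $\bm{y}(t)$ coefficient inequality is the delicate step and I expect it to be the main obstacle. The LHS is diagonal while the RHS $\bm{M}(\bm{q})\bm{Q}$ has nonnegative off-diagonal entries (so those comparisons are automatic); matching the $i$-th diagonal entry, dividing by $q_i>0$, and cancelling common terms reduces the inequality to $C_i(\bm{q})\le D'_i+(1-1/q_i)W_i$, which is precisely the other argument of the minimum defining $C_i(\bm{q})$. This is exactly why $\bm{C}(\bm{q})$ is constructed as that particular minimum, and it is the mechanism that makes the comparison tight in both the $\bm{x}$ and $\bm{y}$ channels simultaneously.

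Once the componentwise bound is established and $\bm{M}(\bm{q})$ is entrywise nonnegative, iterating gives $\bm{0}\le\bm{z}(t)\le\bm{M}(\bm{q})^{t}\bm{z}(0)$. The hypothesis $\rho(\bm{M})<1$ then forces $\bm{M}(\bm{q})^{t}\to\bm{0}$ and therefore $\bm{z}(t)\to\bm{0}$; because $\bm{Q}$ is a positive diagonal matrix and $\bm{x}(t),\bm{y}(t)\ge\bm{0}$ for all $t$, this implies $\bm{x}(t)\to\bm{0}_{n\times 1}$ and $\bm{y}(t)\to\bm{0}_{n\times 1}$ separately, which concludes the proof.
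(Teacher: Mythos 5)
Your proposal is correct and follows essentially the same route as the paper: form $\bm{z}(t)=\bm{x}(t)+\bm{Q}\bm{y}(t)$, establish the componentwise domination $\bm{z}(t+1)\le\bm{M}\bm{z}(t)$ by comparing the $\bm{x}$- and $\bm{y}$-coefficients against the two arguments of the minimum defining $\bm{C}(\bm{q})$, and iterate using $\rho(\bm{M})<1$. Your coefficient bookkeeping for the $\bm{y}$-channel (arriving at $C_i\le D'_i+(1-1/q_i)W_i$) is in fact the correct and tight version of the paper's computation, which contains a sign slip in the intermediate coefficient $(\bm{Q}-\bm{Q}\bm{D'}-(\bm{I}-\bm{Q})\bm{W})$ that should read $+(\bm{I}-\bm{Q})\bm{W}$.
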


Theorem \ref{convergence} provides a "loose" condition for the convergence of the model due to the presence of $n$ parameters $q_i, i=\overline{1,n}$. The variation of these $n$ parameters may change the convergence rate of the model, although it is not certain whether this quantity is positively correlated with the spectral radius of $\bm{M}$ or not. It should also be noted that introducing the delayed state only helps the model to simulate in more detail and better capture the information dissemination behavior of social network users, but it does not assert that the convergence condition of the SDIR model is always better than existing related models. We observe that when Assumption \ref{as1} is removed, the result of Theorem \ref{convergence} may not always be achieved.
\section{The Problems and bounding functions}\label{bounds}
\subsection{The main problem}
Let the vector $\bm{m}(t)\coloneqq \bm{x}(t)+\bm{y}(t)+\bm{r}(t)$ and $\bm{m}^\star\in\mathbb{R}^n$ consisting of elements $m^\star_{i}=\sup\limits_{t}{m_{i}(t)}$.

\begin{definition}
	The quantity $\|\bm{m}^*-\bm{m}(0)\|_1$ is called the number of increased infections on $G$ after the diffusion process ends.
\end{definition}

In the following, we consider the problem of minimizing the diffusion quantity from the nodes in the given seed set by deleting some appropriate edges. Suppose we choose a set of edges $P\subseteq Q$ to delete from the graph. Denote $\bm{B}_{-P}$ as the matrix obtained from $\bm{B}$, $G_{-P}=G(V,E\backslash P)$, and let $\sigma(P)\coloneqq\|\bm{m}^\star-\bm{m}(0)\|_1$ be the function of the number of increased infections in the network when the diffusion ends after deleting the set of edges $P$.

\begin{problem}\label{Problem}
	Given a directed graph $G(V,E)$ with $|V|=n$, representing SDIR diffusion model, an initial state vector $\bm{x}(0)\in[0,1]^n$ and $\bm{y}(0)$, $\bm{r}(0)$ such that $\bm{x}(0)+\bm{y}(0)+\bm{r}(0)\in[0,1]^n$. Let $Q$ be a candidate set of edges such that $Q\subseteq E$ and a positive integer $k$ satisfying $k\leq|Q| = q$. Find a set of edges $P\subseteq Q,|P|\leq k$ to delete from the graph such that the infection amount on $G_{-P}$ is minimized i.e., find
	\[
	P^\star\in \mathop{\operatorname{argmin}}_{P\subseteq Q,|P|\leq k}\sigma(P)
	\]
\end{problem}
An easily noticeable point is that the objective function for Problem \ref{Problem} lacks submodularity or supermodularity. This motivates us to find a good heuristic algorithm that provides a solution with good optimality approximation. Similar to \cite{Yi2022}, if the condition of Theorem \ref{convergence} is satisfied, combined with the constraints of the above assumptions, we can find a monotonic upper bounding function that possesses supermodularity.

\subsection{Supermodular Upper Bound}
\begin{theorem}\label{upbound}
	Under Assumption \ref{as1}, when $\rho(\bm{M}_{-P})<1$, the infection amount of the SDIR model with the removed edge set $P$ does not exceed 
	\begin{align*}
		\sigma^U(P) = &\bm{1}^\top(\bm{A}+\bm{Q}(\bm{I}-\bm{A}))^{-1}\\
		&(\bm{D}(\bm{I}-\bm{M}_{-P})^{-1}-\bm{I})(\bm{x}(0)+\bm{Q}\bm{y}(0)).
	\end{align*}
\end{theorem}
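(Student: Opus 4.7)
The plan is to introduce the Lyapunov-like vector $\bm{z}(t)\coloneqq\bm{x}(t)+\bm{Q}\bm{y}(t)$, establish the entry-wise linear majorization $\bm{z}(t+1)\leq\bm{M}_{-P}\bm{z}(t)$, sum it through the Neumann expansion of $(\bm{I}-\bm{M}_{-P})^{-1}$, and then algebraically reshape the resulting sum into $\sigma^U(P)$ by exploiting the identity $\widetilde{\bm{A}}\bm{S}(0)\bm{B}_{-P}=\bm{M}_{-P}-\bm{I}+\bm{C}(\bm{q})$, where $\widetilde{\bm{A}}\coloneqq\bm{A}+\bm{Q}(\bm{I}-\bm{A})$. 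A standing ingredient is that $\bm{x}(t),\bm{y}(t),\bm{r}(t)$ remain coordinate-wise non-negative together with all of the coefficient matrices, so that entrywise inequalities can be propagated through multiplication by $(\bm{I}-\bm{M}_{-P})^{-1}$.

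First, I would plug the SDIR recursions into $\bm{z}(t+1)$ and group by $\bm{x}(t)$ and $\bm{y}(t)$, obtaining coefficients $\bm{I}-\bm{D}+\widetilde{\bm{A}}\bm{S}(0)\bm{B}_{-P}$ and $\bm{Q}\bigl(\bm{I}-\bm{D'}+(\bm{Q}^{-1}-\bm{I})\bm{W}\bigr)$ respectively. Assumption~\ref{as1} combined with the first half of the $\min$ in the definition of $\bm{C}(\bm{q})$ gives $\bm{I}-\bm{D}\leq \bm{I}-\bm{C}(\bm{q})$, while the second half gives $\bm{I}-\bm{D'}+(\bm{Q}^{-1}-\bm{I})\bm{W}\leq\bm{I}-\bm{C}(\bm{q})$. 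Since $\bm{Q}$ and $\bm{C}(\bm{q})$ are diagonal they commute, and adding the non-negative slack $\widetilde{\bm{A}}\bm{S}(0)\bm{B}_{-P}\bm{Q}\bm{y}(t)$ completes the right-hand side into $\bm{M}_{-P}\bigl(\bm{x}(t)+\bm{Q}\bm{y}(t)\bigr)=\bm{M}_{-P}\bm{z}(t)$, yielding $\bm{z}(t+1)\leq \bm{M}_{-P}\bm{z}(t)$ and, by iteration, $\sum_{t\geq 0}\bm{z}(t)\leq(\bm{I}-\bm{M}_{-P})^{-1}(\bm{x}(0)+\bm{Q}\bm{y}(0))$, with the Neumann formula justified by $\bm{M}_{-P}\geq 0$ entrywise and $\rho(\bm{M}_{-P})<1$.

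Next, I would telescope the dynamics of $\bm{m}(t)=\bm{x}(t)+\bm{y}(t)+\bm{r}(t)$. A direct cancellation of the SDIR recursions gives the conservation identity $\bm{m}(t+1)-\bm{m}(t)=\bm{S}(0)\bm{B}_{-P}\bm{x}(t)\geq \bm{0}$, so $\bm{m}^\star=\lim_t\bm{m}(t)$ and $\sigma(P)=\bm{1}^\top\bm{S}(0)\bm{B}_{-P}\sum_t\bm{x}(t)$. Since $\bm{x}(t)\leq\bm{z}(t)$ entrywise and every factor is non-negative, Step~1 yields $\sigma(P)\leq\bm{1}^\top\bm{S}(0)\bm{B}_{-P}(\bm{I}-\bm{M}_{-P})^{-1}(\bm{x}(0)+\bm{Q}\bm{y}(0))$. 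Using $\widetilde{\bm{A}}\bm{S}(0)\bm{B}_{-P}=\bm{M}_{-P}-\bm{I}+\bm{C}(\bm{q})$ and left-multiplying by $\widetilde{\bm{A}}^{-1}$ (which is well defined by Assumption~\ref{as2}) produces
\begin{equation*}
\bm{S}(0)\bm{B}_{-P}(\bm{I}-\bm{M}_{-P})^{-1}=\widetilde{\bm{A}}^{-1}\bigl(\bm{C}(\bm{q})(\bm{I}-\bm{M}_{-P})^{-1}-\bm{I}\bigr).
\end{equation*}
The inequality $\bm{C}(\bm{q})\leq\bm{D}$ coupled with the entrywise non-negativity of $\widetilde{\bm{A}}^{-1}$, $(\bm{I}-\bm{M}_{-P})^{-1}$, and $\bm{x}(0)+\bm{Q}\bm{y}(0)$ then upgrades $\bm{C}(\bm{q})$ to $\bm{D}$ inside $\bm{1}^\top(\,\cdot\,)(\bm{x}(0)+\bm{Q}\bm{y}(0))$, producing exactly $\sigma^U(P)$.

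The main obstacle is Step~1: arranging for the coefficients of $\bm{x}(t)$ and of $\bm{y}(t)$ to be dominated simultaneously by a single matrix $\bm{M}_{-P}$ acting on $\bm{z}(t)$. This is precisely the reason $\bm{C}(\bm{q})$ is defined as the entry-wise minimum of $D_i$ and $D'_i+(1-1/q_i)W_i$: the first bound absorbs the $\bm{D}$-term arising from the $\bm{x}(t)$ coefficient, while the second extracts the residual $\bm{W}$-contribution inside the $\bm{y}(t)$ coefficient once $\bm{Q}$ is factored out. Without this tailored choice of $\bm{C}(\bm{q})$, the inequality $\bm{z}(t+1)\leq\bm{M}_{-P}\bm{z}(t)$ would not close, and the Neumann-series reduction to the closed-form bound $\sigma^U(P)$ would break down.
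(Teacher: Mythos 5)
Your proposal is correct and follows essentially the same route as the paper: derive $\bm{x}(t)+\bm{Q}\bm{y}(t+1)\leq\bm{M}_{-P}(\bm{x}(t)+\bm{Q}\bm{y}(t))$ as in the convergence theorem, telescope $\bm{m}(t+1)-\bm{m}(t)=\bm{S}(0)\bm{B}_{-P}\bm{x}(t)$, and sum the Neumann series to get $\bm{1}^\top\bm{S}(0)\bm{B}_{-P}(\bm{I}-\bm{M}_{-P})^{-1}(\bm{x}(0)+\bm{Q}\bm{y}(0))$. In fact you go one step further than the paper, which stops at that expression: your identity $\bm{S}(0)\bm{B}_{-P}(\bm{I}-\bm{M}_{-P})^{-1}=(\bm{A}+\bm{Q}(\bm{I}-\bm{A}))^{-1}(\bm{C}(\bm{q})(\bm{I}-\bm{M}_{-P})^{-1}-\bm{I})$ together with the relaxation $\bm{C}(\bm{q})\leq\bm{D}$ is exactly the (omitted) algebra needed to reach the stated form of $\sigma^U(P)$.
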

If the constraints of Assumption \ref{as1} are satisfied, it is clear that the obtained upper bound function is not greater than the upper bound function for the D-SIR model of \cite{Yi2022} and is significantly smaller if there exists a node $i$ satisfying $D'_i>D_i$ (the reader is recommended to see the proof of Theorem \ref{convergence} in the appendix for more details). We have the following lemma to prove the monotonicity and supermodularity of the function $\sigma^U(.)$.

\begin{lemma}\label{supup}
	$\sigma^U(.)$ is a non-increasing and supermodular function.
\end{lemma}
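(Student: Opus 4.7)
The plan is to reduce both claims to entry-wise comparisons of the resolvent $(\bm{I}-\bm{M}_{-P})^{-1}$. The key structural observation is that only $\bm{B}_{-P}$ depends on $P$ in the definition $\bm{M}_{-P}=\bm{I}-\bm{C}(\bm{q})+(\bm{A}+\bm{Q}(\bm{I}-\bm{A}))\bm{S}(0)\bm{B}_{-P}$, and deleting an edge $e=(j,i)$ only zeros out the single entry $B_{ij}$. Consequently, for $P_1\subseteq P_2$ one has $\bm{M}_{-P_1}\geq\bm{M}_{-P_2}$ component-wise, and the single-edge perturbation
\[
\bm{E}_e \;:=\; \bm{M}_{-P}-\bm{M}_{-(P\cup\{e\})}\;=\;B_{ij}\,(\bm{A}+\bm{Q}(\bm{I}-\bm{A}))\bm{S}(0)\,\bm{e}_i\bm{e}_j^\top
\]
is a nonnegative rank-one matrix that depends only on $e$, not on the ambient base set $P$---a fact that will be crucial for supermodularity. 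In the regime $\rho(\bm{M}_{-P})<1$ (which passes to every superset of $P$, since shrinking a nonnegative matrix cannot raise its spectral radius), the Neumann expansion $(\bm{I}-\bm{M}_{-P})^{-1}=\sum_{k\geq0}\bm{M}_{-P}^{k}$ converges and yields entry-wise monotonicity $(\bm{I}-\bm{M}_{-P_1})^{-1}\geq(\bm{I}-\bm{M}_{-P_2})^{-1}$.

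For monotonicity I would simply subtract; the $-\bm{I}$ terms cancel and leave
\begin{align*}
&\sigma^U(P_1)-\sigma^U(P_2)\\
&=\bm{1}^\top(\bm{A}+\bm{Q}(\bm{I}-\bm{A}))^{-1}\bm{D}\bigl[(\bm{I}-\bm{M}_{-P_1})^{-1}-(\bm{I}-\bm{M}_{-P_2})^{-1}\bigr](\bm{x}(0)+\bm{Q}\bm{y}(0)).
\end{align*}
By Assumption~\ref{as2} the diagonal matrix $\bm{A}+\bm{Q}(\bm{I}-\bm{A})$ is strictly positive, so its inverse is a nonnegative diagonal matrix; $\bm{D}$ is nonnegative diagonal; the initial state vector is componentwise nonnegative; and the bracket is nonnegative by the previous paragraph. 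Hence $\sigma^U(P_1)\geq\sigma^U(P_2)$.

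For supermodularity I would apply the resolvent identity $(\bm{I}-\bm{X})^{-1}-(\bm{I}-\bm{Y})^{-1}=(\bm{I}-\bm{X})^{-1}(\bm{X}-\bm{Y})(\bm{I}-\bm{Y})^{-1}$ with $\bm{X}=\bm{M}_{-P}$ and $\bm{Y}=\bm{M}_{-(P\cup\{e\})}$ to put a single marginal gain in closed form,
\begin{align*}
&\sigma^U(P)-\sigma^U(P\cup\{e\})\\
&=\bm{1}^\top(\bm{A}+\bm{Q}(\bm{I}-\bm{A}))^{-1}\bm{D}(\bm{I}-\bm{M}_{-P})^{-1}\bm{E}_e(\bm{I}-\bm{M}_{-(P\cup\{e\})})^{-1}(\bm{x}(0)+\bm{Q}\bm{y}(0)).
\end{align*}
Supermodularity then reduces to $\sigma^U(P_1)-\sigma^U(P_1\cup\{e\})\geq \sigma^U(P_2)-\sigma^U(P_2\cup\{e\})$ for $P_1\subseteq P_2$ and $e\notin P_2$. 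Since $\bm{E}_e$ is the \emph{same} nonnegative matrix on both sides, and since both $(\bm{I}-\bm{M}_{-P_1})^{-1}\geq(\bm{I}-\bm{M}_{-P_2})^{-1}$ and $(\bm{I}-\bm{M}_{-(P_1\cup\{e\})})^{-1}\geq(\bm{I}-\bm{M}_{-(P_2\cup\{e\})})^{-1}$, the triple product $(\bm{I}-\bm{M}_{-P_1})^{-1}\bm{E}_e(\bm{I}-\bm{M}_{-(P_1\cup\{e\})})^{-1}$ dominates its $P_2$ analogue entry-wise, and sandwiching between nonnegative row and column factors preserves the inequality.

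The main obstacle is really only the bookkeeping that makes the dominance argument clean: it is essential that $\bm{E}_e$ is $P$-independent, so that the two marginal gains can be compared through a \emph{common} perturbation factor rather than through perturbations that themselves shift with the base set. Once that is pinned down, the rest is routine Neumann series and nonnegative matrix algebra.
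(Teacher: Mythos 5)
Your proof is correct and follows essentially the same route the paper intends: the paper gives no proof of Lemma~\ref{supup} itself, deferring to the argument of Yi et al.\ \cite{Yi2022}, which is exactly the Neumann-series monotonicity plus resolvent-identity computation you carry out (your key points --- that $\bm{E}_e$ is a $P$-independent nonnegative rank-one perturbation, that $\rho(\bm{M}_{-P})<1$ propagates to supersets, and that entry-wise ordered nonnegative matrices have entry-wise ordered products --- are precisely what makes that argument go through). Your proposal in fact supplies the details the paper omits, and is fine as written.
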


\subsection{Supermodular Lower Bound}

A noteworthy point in the SDIR model is that one can find a lower bound function that is also supermodular. Hence, instead of running the greedy algorithm only for the upper bound function $\sigma^U(.)$, one can run the greedy algorithm once more for the following lower bound function $\sigma^L(.)$. This is precisely the idea of the Sandwich approximation and will be presented in more detail in the next section. Let $\bm{N}_{-P} \coloneqq \bm{I}-\bm{D}+\bm{A}\bm{S}(0)\bm{B}_{-P}$.

\begin{theorem}\label{lowbound}
	Under Assumption \ref{as2}, when $\rho(\bm{N}_{-P})<1$ the number of infections of the SDIR model with the set of removed edges $P$ is at least
	\begin{align*}
		\sigma^L(P) = &\bm{1}^\top\bm{A}^{-1}\big(\bm{D}(\bm{I}-\bm{N}_{-P})^{-1}-\bm{I}\big)\\
        &(\bm{x}(0)+\bm{W}(\bm{W}+\bm{D}')^{-1}\bm{y}(0)\big).
	\end{align*}
	
\end{theorem}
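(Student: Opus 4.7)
My plan is to express $\sigma(P)$ as an infinite sum driven by $\bm{x}(t)$, bound that sum from below via a comparison argument, and then collapse the result algebraically into the target closed form.

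First I will show that $\bm{m}(t)$ is coordinate-wise non-decreasing by adding the three dynamical equations for $\bm{x}$, $\bm{y}$, and $\bm{r}$; after cancellations one obtains the clean identity $\bm{m}(t+1)-\bm{m}(t)=\bm{S}(0)\bm{B}_{-P}\bm{x}(t)$, so telescoping yields
\begin{equation*}
\sigma(P)=\bm{1}^\top(\bm{m}^\star-\bm{m}(0))=\bm{1}^\top\bm{S}(0)\bm{B}_{-P}\sum_{t=0}^\infty\bm{x}(t).
\end{equation*}
Since the multiplier $\bm{1}^\top\bm{S}(0)\bm{B}_{-P}$ has non-negative entries, any entry-wise lower bound on $\sum_t\bm{x}(t)$ translates directly to a lower bound on $\sigma(P)$.

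Next I will lower bound the two-block dynamics by dropping the non-negative coupling term that feeds $\bm{x}$ into $\bm{y}$. Because $\bm{I}-\bm{W}-\bm{D}'$ stays entry-wise non-negative under the probabilistic bounds on $\omega_i$ and $\delta_i'$, iterating the recurrence for $\bm{y}$ gives $\bm{y}(t)\geq(\bm{I}-\bm{W}-\bm{D}')^t\bm{y}(0)$. Substituting this back into the recurrence for $\bm{x}$ leaves
\begin{equation*}
\bm{x}(t+1)\geq \bm{N}_{-P}\bm{x}(t)+\bm{W}(\bm{I}-\bm{W}-\bm{D}')^t\bm{y}(0),
\end{equation*}
which unfolds to a discrete convolution. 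With $\rho(\bm{N}_{-P})<1$ both geometric series converge, and a change of summation index telescopes them into
\begin{equation*}
\sum_{t=0}^\infty\bm{x}(t)\geq(\bm{I}-\bm{N}_{-P})^{-1}\bigl(\bm{x}(0)+\bm{W}(\bm{W}+\bm{D}')^{-1}\bm{y}(0)\bigr).
\end{equation*}

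Finally I will invoke Assumption~\ref{as2} to invert the diagonal matrix $\bm{A}$ and rewrite $\bm{S}(0)\bm{B}_{-P}=\bm{A}^{-1}(\bm{N}_{-P}-\bm{I}+\bm{D})$ directly from the definition of $\bm{N}_{-P}$. The resolvent-style identity $(\bm{N}_{-P}-\bm{I}+\bm{D})(\bm{I}-\bm{N}_{-P})^{-1}=\bm{D}(\bm{I}-\bm{N}_{-P})^{-1}-\bm{I}$ then produces exactly the stated $\sigma^L(P)$. The main delicate point is bookkeeping monotonicity: I have to verify that $\bm{N}_{-P}$, $\bm{I}-\bm{W}-\bm{D}'$, $\bm{W}$, and $\bm{S}(0)\bm{B}_{-P}$ are all entry-wise non-negative so that each inequality propagates correctly through the left- or right-multiplication step. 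These non-negativities follow from the probabilistic meaning of the parameters together with the row-sum condition $\sum_jB_{ij}<1$ imposed on $\bm{B}$ in Section~II, so the lower bound survives intact through to the final scalar inequality.
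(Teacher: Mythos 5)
Your proposal is correct and follows essentially the same route as the paper's proof: the telescoping identity $\bm{m}(t+1)-\bm{m}(t)=\bm{S}(0)\bm{B}_{-P}\bm{x}(t)$, the lower bound $\bm{y}(t)\geq(\bm{I}-\bm{W}-\bm{D}')^{t}\bm{y}(0)$ obtained by dropping the non-negative coupling term, the resulting convolution lower bound on $\sum_{t}\bm{x}(t)$, and the final rewriting of $\bm{S}(0)\bm{B}_{-P}$ via $\bm{A}^{-1}$ under Assumption~\ref{as2}. The only cosmetic difference is that you evaluate the triangular double series by a non-negative rearrangement (Tonelli) argument, whereas the paper isolates this step as a separate lemma proved by sandwiching partial sums; both yield $(\bm{I}-\bm{N}_{-P})^{-1}(\bm{I}-\bm{F})^{-1}$.
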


\begin{lemma}\label{suplow}
	$\sigma^L(.)$ is a non-increasing and supermodular function. 
\end{lemma}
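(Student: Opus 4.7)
The plan is to mirror the strategy of Lemma~\ref{supup} for $\sigma^U$, which is in fact slightly simpler here because $\bm{D}$ already appears in the definition of $\bm{N}_{-P}$. The first step is to put $\sigma^L(P)$ into a manifestly non-negative walk-sum form. From $\bm{I}-\bm{N}_{-P}=\bm{D}-\bm{A}\bm{S}(0)\bm{B}_{-P}$ one gets the identity
$$\bm{D}(\bm{I}-\bm{N}_{-P})^{-1}-\bm{I} = \big(\bm{D}-(\bm{I}-\bm{N}_{-P})\big)(\bm{I}-\bm{N}_{-P})^{-1} = \bm{A}\bm{S}(0)\bm{B}_{-P}(\bm{I}-\bm{N}_{-P})^{-1},$$
and cancelling $\bm{A}^{-1}\bm{A}$ (legitimate because Assumption~\ref{as2} makes $\bm{A}$ invertible) yields
$$\sigma^L(P) = \bm{1}^\top\bm{S}(0)\bm{B}_{-P}(\bm{I}-\bm{N}_{-P})^{-1}\bm{v}, \qquad \bm{v}:=\bm{x}(0)+\bm{W}(\bm{W}+\bm{D}')^{-1}\bm{y}(0)\geq\bm{0}.$$

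Monotonicity then follows from entry-wise comparisons. Enlarging $P$ to $P'$ only sets additional entries of $\bm{B}$ to zero, so $\bm{B}_{-P'}\leq\bm{B}_{-P}$ and hence $\bm{N}_{-P'}\leq\bm{N}_{-P}$ entry-wise; both matrices are non-negative, using that $D_i\in[0,1]$ makes $\bm{I}-\bm{D}\geq\bm{0}$ and $\bm{A},\bm{S}(0),\bm{B}_{-P}\geq\bm{0}$. Because $\rho(\bm{N}_{-P})<1$, the Neumann series $(\bm{I}-\bm{N}_{-P})^{-1}=\sum_{k\geq 0}\bm{N}_{-P}^k$ converges and is entry-wise non-increasing in $P$. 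Since the product of two non-negative entry-wise non-increasing matrices is itself non-negative and non-increasing, and sandwiching between the non-negative vectors $\bm{1}^\top$ and $\bm{v}$ preserves scalar inequalities, we obtain $\sigma^L(P')\leq\sigma^L(P)$.

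For supermodularity I would fully expand the Neumann series. Writing $\bm{N}_{-P}^k=(\bm{I}-\bm{D}+\bm{A}\bm{S}(0)\bm{B}_{-P})^k$ and expanding every matrix product entry-wise produces the absolutely convergent walk-sum
$$\sigma^L(P) = \sum_{\omega}c_\omega\prod_{\ell=1}^{|\omega|}(\bm{B}_{-P})_{i_\ell j_\ell},$$
indexed by finite walks $\omega=((i_1,j_1),\dots,(i_{|\omega|},j_{|\omega|}))$ in $G$, where each coefficient $c_\omega\geq 0$ is a product of diagonal entries of $\bm{I}-\bm{D}$, $\bm{A}$, and $\bm{S}(0)$, together with the endpoint weights from $\bm{1}^\top\bm{S}(0)$ and $\bm{v}$. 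For any fixed $\omega$, set $g_\omega(P):=\prod_{e\in\omega}[e\notin P]$; a short case analysis on whether the freshly added edge lies in $\omega$ and whether $P$ already intersects $\omega$ shows that $g_\omega$ is non-increasing and supermodular as a set function of $P$. Each walk-term is thus a non-negative constant times a non-increasing supermodular $\{0,1\}$-function, so both properties survive the non-negative linear combination defining $\sigma^L$.

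The one subtlety I expect to be the main technical hurdle is justifying the rearrangement of the doubly infinite walk-sum. Because edge deletion can only decrease $\bm{N}$ entry-wise, the Perron--Frobenius theorem gives $\rho(\bm{N}_{-P'})\leq\rho(\bm{N}_{-P})<1$ for every $P'\supseteq P$ under consideration, so the series converges absolutely and every finite truncation is already a non-negative combination of monotone supermodular monomials. The supermodularity and monotonicity inequalities then pass to the pointwise limit, completing the argument; no new ideas beyond those underlying Lemma~\ref{supup} appear to be required.
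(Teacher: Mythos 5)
Your proof is correct and follows exactly the route the paper intends: the paper gives no explicit argument for Lemma~\ref{suplow}, deferring to the walk-expansion technique of Yi et al.~\cite{Yi2022}, and your reduction of $\sigma^L(P)$ to $\bm{1}^\top\bm{S}(0)\bm{B}_{-P}(\bm{I}-\bm{N}_{-P})^{-1}\bm{v}$ followed by the Neumann-series decomposition into non-negative multiples of the supermodular indicators $\prod_{e\in\omega}[e\notin P]$ is precisely that argument. No gaps.
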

The proofs of Lemmas \ref{supup} and \ref{suplow} can be argued in a manner similar to that in \cite{Yi2022}. Now, we have $\sigma^L(P)\leq \sigma(P)\leq \sigma^U(P)$,
with $\sigma^L(P),\sigma^U(P)$ being two non-increasing and supermodular functions. Accordingly, the Sandwich approximation principle \cite{Lu2015, Wang2017} implies that the solution returned by the greedy algorithm always guarantees an approximation ratio relative to the optimal solution; specifically,
{\small
\begin{align*}
\sigma^U(\emptyset)-\sigma(P_{\text{Sand}}) \geq&\max{\left\{\frac{\sigma^U(\emptyset)-\sigma(P_{L})}{\sigma^U(\emptyset)-\sigma^L(P_{L})},
\frac{\sigma^U(\emptyset)-\sigma^U(P^\star)}{\sigma^U(\emptyset)-\sigma(P^\star)}\right\}}\\
&\left(1-\frac{1}{e}-\epsilon\right)(\sigma^U(\emptyset)-\sigma(P^\star)),
\end{align*}
}
where $P_{\text{Sand}}=\operatorname*{argmin}_{P\in\{P_{L},P_{0},P_{U}\}}\sigma(P)$ with $P_{L},P_{U}$ being the solutions returned by the greedy algorithm applied to $\sigma^L(.)$ and $\sigma^U(.)$, respectively, $P_{0}$ being a solution returned by some algorithm (possibly greedy) applied to $\sigma(.)$, and $P^\star$ denoting the optimal solution for objective function.
\section{Algorithms}
First, we have the following proposition showing that solving Problem \ref{Problem} still remains computationally hard due to the use of a more general model than SIR model.

\begin{proposition}
	The problem of finding the optimal edge set $P^\star$ to remove in order to minimize the spread in the SDIR network model, as formulated in \ref{Problem}, is NP-hard.
\end{proposition}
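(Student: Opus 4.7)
The plan is to reduce from the edge-deletion influence minimization problem for the discrete SIR model, which was established as NP-hard by Yi et al.~\cite{Yi2022}. The key observation is that the SDIR dynamics collapse to the SIR dynamics under a natural choice of parameters, so any SIR instance embeds as an SDIR instance with identical objective values on every candidate subset $P\subseteq Q$, making a polynomial-time algorithm for Problem~\ref{Problem} yield one for SIR edge deletion.

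Concretely, first I would set $\bm{A}=\bm{I}$, i.e., $\alpha_i=1$ for all $i$, which is permitted by Assumption~\ref{as2}, and initialize $\bm{y}(0)=\bm{0}$. With $\bm{A}=\bm{I}$, the forcing term $(\bm{I}-\bm{A})\bm{S}(0)\bm{B}_{-P}\bm{x}(t)$ in the $\bm{y}$-equation vanishes, so the delayable dynamics reduce to $\bm{y}(t+1)=(\bm{I}-\bm{W}-\bm{D}')\bm{y}(t)$, which, combined with $\bm{y}(0)=\bm{0}$, yields $\bm{y}(t)\equiv\bm{0}$. The remaining equations become
\[
\bm{x}(t+1)=(\bm{I}-\bm{D}+\bm{S}(0)\bm{B}_{-P})\bm{x}(t),\qquad \bm{r}(t+1)=\bm{D}\bm{x}(t)+\bm{r}(t),
\]
which are exactly the discrete SIR update equations used in \cite{Yi2022}. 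The auxiliary parameters $\bm{W},\bm{D}'$ may be assigned arbitrary values consistent with Assumption~\ref{as1}, e.g., $\bm{W}=\bm{0}$ and $\bm{D}'=\bm{D}$, without affecting the reduction.

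Next I would verify that this parameter specialization makes the SDIR infection amount $\sigma(P)=\|\bm{m}^\star-\bm{m}(0)\|_1$ coincide, entry by entry and time step by time step, with the SIR infection amount on the same directed graph $G$, the same diffusion matrix $\bm{B}$, the same recovery matrix $\bm{D}$, the same candidate edge set $Q$, the same budget $k$, and initial state $\bm{x}(0)$. Since $\bm{y}(t)=\bm{0}$, we have $\bm{m}(t)=\bm{x}(t)+\bm{r}(t)$, so the supremum-norm objective in Problem~\ref{Problem} reduces to the SIR objective. Thus a polynomial-time algorithm solving Problem~\ref{Problem} would solve the SIR edge-deletion problem, and NP-hardness follows by the reduction.

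The main obstacle, such as it is, is bookkeeping rather than a genuine difficulty: one must check that the SIR hardness result of \cite{Yi2022} is stated in a form broad enough to cover directed graphs with heterogeneous $\beta_{ij}$ and $\delta_i$ and an arbitrary candidate set $Q\subseteq E$, which matches the generality of our SDIR formulation. If a self-contained argument were preferred instead of invoking \cite{Yi2022}, one could reduce directly from \textsc{Set Cover} or \textsc{Vertex Cover} by constructing a bipartite gadget in which each deleted edge corresponds to covering an element, in the spirit of classical hardness proofs for influence minimization; but the embedding above is far more economical and keeps the proof short.
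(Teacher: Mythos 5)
Your proposal is correct and is exactly the argument the paper intends: the sentence preceding the proposition ("still remains computationally hard due to the use of a more general model than SIR") signals a restriction to the SIR case, and the paper in fact gives no explicit proof, so your specialization $\bm{A}=\bm{I}$, $\bm{y}(0)=\bm{0}$ (which makes the $\bm{y}$-dynamics vanish and collapses the update to the D-SIR equations of Yi et al.) supplies precisely the missing reduction. Your explicit check that the parameter choices respect Assumptions~\ref{as1} and~\ref{as2} and that the objective $\sigma(P)$ coincides with the SIR infection amount is the right (and necessary) bookkeeping that the paper omits.
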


Since the problem is NP-hard, it is very difficult to design an algorithm that provides an exact solution in all cases within polynomial time. Fortunately, with the two bound functions derived in the previous section, and when the condition $\rho(\bm{M}_{-P})<1$ is satisfied, we propose the Sandwich approximation algorithm, which is described in Algorithm \ref{sandwich}.

\begin{algorithm}[H]
	\caption{Greedy Algorithm (GA)}
	\label{greedy}
	\textbf{Input:} A function $f\in\left\{\sigma^L,\sigma^U\right\}$, a graph $G$, initial states, a candidate edge set $Q$, and an integer $k$.
	
	\textbf{Output:} An edge set $P\subseteq Q$ of size $k$.
	
	Initialize $P\leftarrow \emptyset$
	
	\For{$i = 1$ \KwTo $n$}{
		Compute $f(P\cup \{ e \})$ for each $e\in Q\backslash P$\\
		$e^\star \leftarrow \operatorname*{argmax}_{e\in Q\backslash P} \big(f(P)-f(P\cup \{ e \})\big)$\\
		$P\leftarrow P\cup \{ e^\star \}$
	}
	
	\Return $P$
\end{algorithm}
Specifically, Algorithm \ref{greedy} is applied to the two functions $\{\sigma^L,\sigma^U\}$ in order to obtain three corresponding edge sets by greedily selecting an edge in each iteration. Based on Lemmas \ref{supup} and \ref{suplow}, the solution returned by this greedy algorithm guarantees an approximation ratio of $(1-1/e-\epsilon)$ for the two functions $\sigma^U(\varnothing)-\sigma^L(.)$ and $\sigma^U(\varnothing)-\sigma^U(.)$ with respect to the optimal solution. Next, among the resulting edge sets, we compute the estimated infection for each set and select the one that yields the smallest spreading influence.
\begin{algorithm}[H]
	\caption{Sandwich Approximation Algorithm (SAA)}
	\label{sandwich}
	\textbf{Input:} A graph $G$, initial states, a candidate edge set $Q$, and an integer $k$.
	
	\textbf{Output:} An edge set $P\subseteq Q$ of size $k$.
	
	$P_L\leftarrow$ Result of GA for $\sigma^L(.)$
	
	$P_0\leftarrow$ Result of a heuristic algorithm for $\sigma(.)$
	
	$P_U\leftarrow$ Result of GA for $\sigma^U(.)$
	
	\Return $P_{\text{Sand}}\leftarrow \operatorname*{argmin}_{P\in\{P_{L},P_{0},P_{U}\}}\sigma(P)$
\end{algorithm}

\begin{remark}
	The solution $P_0$ of $\sigma(.)$ does not play a crucial role in proving the approximation guarantee of SAA. Moreover, the greedy method for $\sigma(.)$ does not provide any guaranteed approximation ratio compared to the optimal solution, while incurring higher computational complexity. Therefore, $P_0$ can be obtained using other simpler methods, such as random selection. We propose two algorithms that are GA and SAA.
\end{remark}
\section{Experiments}
In this section, we present numerical examples to demonstrate the convergence and effectiveness of the proposed algorithms. We evaluate the performance of our method on a real-world dataset and compare it against some known strategies.
\subsection{Comparison with Heuristic Algorithms}
First, we compare the proposed SAA with two heuristic algorithms, namely \textit{Max-degree} \cite{Albert2000} and \textit{Random} \cite{Callaway2000}, applied to the SDIR model. The algorithms are defined as follows:
\begin{enumerate}
    \item \textbf{Max-degree:} This algorithm iteratively removes the edge incident to the node with the highest weighted degree at each step.
    \item \textbf{Random:} This algorithm removes an edge selected uniformly at random at each iteration. Note that the procedure for determining $P_0$ is integrated into this selection process.
\end{enumerate}

We conducted experiments using the real-world contact network dataset collected in Haslemere, England \cite{Klepac2018}. The parameter settings and preprocessing steps follow the methodology described in Yi \emph{et al} \cite{Yi2022}. The new simulation parameters are initialized as follows: The initial state $\bm{y}(0)$ is set to zero for all nodes, except for five seed nodes which are assigned random values in the interval $[0, 0.05]$. Also, the initial state $\bm{x}(0)$ is set to zero for all nodes, except for five seed nodes which are assigned random values in the interval $[0.8, 0.85]$. Next, the infection rates $A_i$ are distributed based on node categories: the five seed nodes have $A_i \in [0.9, 1]$; forty-five randomly selected nodes have $A_i \in [0.65, 0.87]$; and the remaining nodes have $A_i \in [0.15, 0.35]$. For each node $i \in V$, the weight $W_i$ is chosen uniformly at random from $[0.15, 0.35]$. To ensure Assumption \ref{as1} is satisfied, $D'_i$ is selected from the interval $[\max(0, D_i - W_i), 0.95 - W_i]$.

\begin{figure}[htbp]
    \centering
    \includegraphics[width=\linewidth]{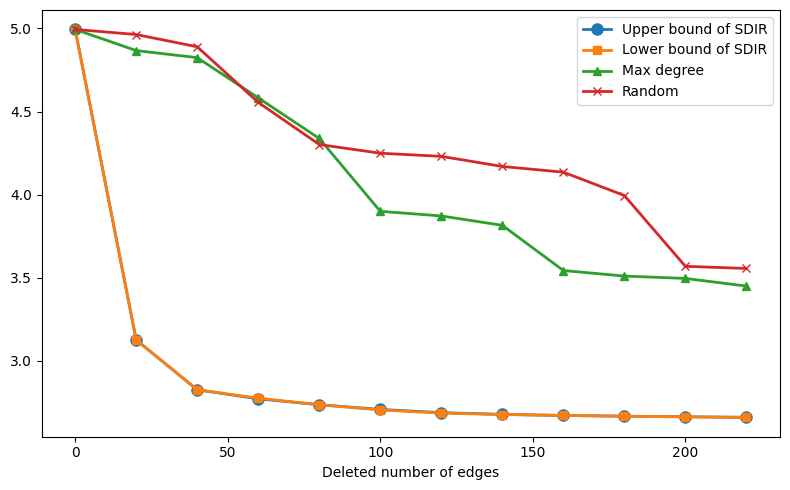} 
    \caption{Haslemere Network with preprocessing}
    \label{BBC}
\end{figure}

The simulation results are presented in Fig.~\ref{BBC}. It is evident that the solutions obtained from the proposed sandwich bounding functions significantly outperform the two heuristic algorithms. The number of increased infections decreases rapidly within the removal of the first 50 edges. Notably, the upper and lower bounds derived from the sandwich algorithms are very close to each other, indicating the tightness of the approximation with respect to the objective function.
It is similar to the results or the complexity of the Greedy algorithm in \cite{Yi2022}, in our simulations, the running time of our Algorithm \ref{greedy} for the SDIR model is $O(n^{3} + k(n^{2} + qn))$.

\subsection{Convergence Analysis}
\begin{figure}[H]
    \centering
    \includegraphics[width=\linewidth]{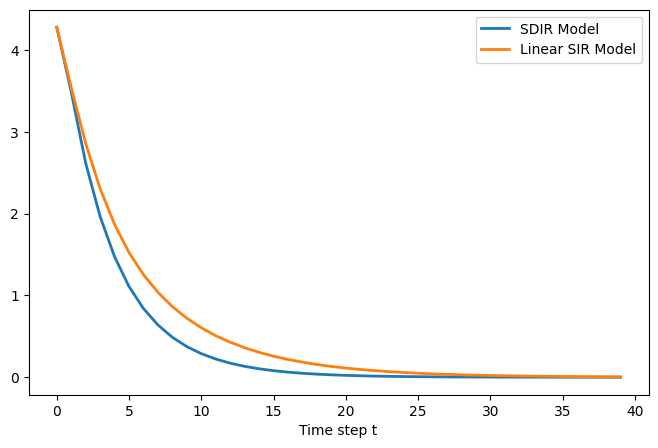}
    \caption{Convergence of infection states over time}
    \label{Converge}
\end{figure}
Next, utilizing the same parameter setup, we examine the convergence of the infection states in both the proposed SDIR model and the linear SIR model. The total amount of infection state of the SDIR model is defined as $||\bm{x}(t) + \bm{y}(t)||_1$. For comparison, the total amount of infection state of the linear SIR model is calculated as $||\hat{\bm{x}}(t)||_1$, where the initial condition is set to $\hat{\bm{x}}(0) = \bm{x}(0) + \bm{y}(0)$.

The results illustrated in Fig.~\ref{Converge} demonstrate that the SDIR model exhibits faster convergence compared to the linear SIR model. Specifically, the infection states in the SDIR model rapidly decays to $\bm{0}_{n\times 1}$ within the first 10 time steps. In contrast, the linear SIR model exhibits a slower decay, requiring approximately 5 additional time steps to reach a similar extinction state. This empirical observation reinforces the theoretical validity of Theorem~\ref{convergence} under Assumption~\ref{as1}.
\section{Conclusion}
We studied the discrete-time SDIR model, a new modified SIR model. We showed that the sufficient condition for the convergence and the stability of the SDIR model. Furthermore, we investigate the problem that minimize the number of infections where deleting some edges. Two algorithms, greedy and Sandwich algorithms were proposed to solve the minimizing infections problem. Some experimental results were given.

\section*{Appendix}

\subsection{Proof of Theorem \ref{convergence}}
Under Assumption \ref{as1}, it is clear that $D_i\leq D_i'$, $\forall i=1,2,\ldots,n$. Hence we can choose $q_i\in\left[\dfrac{W_i}{W_i+D_i'-D_i},1\right]$. We have: $\bm{x}(t+1)+\bm{Q}\bm{y}(t+1)$
\begin{align*}
&= (\bm{I}-\bm{D}+(\bm{A}+\bm{Q}(\bm{I}-\bm{A}))\bm{S}(0)\bm{B})\bm{x}(t) \\
&\qquad +(\bm{Q}\bm{I}-\bm{Q}\bm{D'}-(\bm{I}-\bm{Q})\bm{W})\bm{y}(t)\\[6pt]
&= (\bm{I}-\bm{D}+(\bm{A}+\bm{Q}(\bm{I}-\bm{A}))\bm{S}(0)\bm{B})\bm{x}(t) \\
&\qquad +(\bm{I}-\bm{D'}-(\bm{Q}^{-1}-\bm{I})\bm{W})\bm{Q}\bm{y}(t)\\[6pt]
&\leq (\bm{I}-\bm{C}+(\bm{A}+\bm{Q}(\bm{I}-\bm{A}))\bm{S}(0)\bm{B})\bm{x}(t) \\
&\qquad +(\bm{I}-\bm{C}+(\bm{A}+\bm{Q}(\bm{I}-\bm{A}))\bm{S}(0)\bm{B})\bm{Q}\bm{y}(t)\\[6pt]
&= \bm{M}(\bm{x}(t)+\bm{Q}\bm{y}(t)),\forall t\geq 0.
\end{align*}

Therefore, $
\bm{x}(t)+\bm{Q}\bm{y}(t) \leq \bm{M}^t(\bm{x}(0)+\bm{Q}\bm{y}(0))$. If $\rho(\bm{M}) \le 1-\epsilon$, then $\|\bm{x}(t)+\bm{Q}\bm{y}(t)\|\leq (1-\epsilon)^t\|\bm{x}(0)+\bm{Q}\bm{y}(0)\|$,
which forces $\bm{x}(t)$ and $\bm{y}(t)$ to converge to $0$ as $t\to\infty$. This completes the proof.
\subsection{Proof of Theorem \ref{upbound}}
From the proof of Theorem \ref{convergence}, we have $\bm{x}(t) \leq \bm{M}_{-P}^t(\bm{x}(0)+\bm{Q}\bm{y}(0))$, $\forall t\geq 0$. Next, observe that
\begin{align*}
&\bm{m}(t)
= \bm{x}(t)+\bm{y}(t)+\bm{r}(t) \\
&= \bm{x}(0)+\bm{y}(0)+\bm{r}(0)+\bm{S}(0)\bm{B}_{-P}\sum_{l=0}^{t-1}\bm{x}(l)\\
&\leq \bm{m}(0)+\bm{S}(0)\bm{B}_{-P}\sum_{l=0}^{t-1}\bm{M}_{-P}^{l}\big(\bm{x}(0)+\bm{Q}\bm{y}(0)\big).\\
&\Rightarrow \|\bm{m}^\star-\bm{m}(0)\|_1
= \lim_{t\rightarrow\infty}\|\bm{m}(t)-\bm{m}(0)\|_1\\
&\leq \bm{1}^\top\bm{S}(0)\bm{B}_{-P}(\bm{I}-\bm{M}_{-P})^{-1}(\bm{x}(0)+\bm{Q}\bm{y}(0)).
\end{align*}

This completes the proof.

\subsection{Proof of Theorem \ref{lowbound}}

First, we have the following lemma:
\begin{lemma}\label{splemma}
    Let $\bm{A}$ and $\bm{B}$ be two square nonnegative real matrices of the same dimension with spectral radius $\rho(\bm{A})<1$ and $\rho(\bm{B})<1$. Then
    \[
L=\lim_{t\rightarrow\infty}\sum_{s=0}^{t}\sum_{l=0}^{s}\bm{A}^{l}\bm{B}^{\,s-l} = (\bm{I}-\bm{A})^{-1}(\bm{I}-\bm{B})^{-1}.
    \]
\end{lemma}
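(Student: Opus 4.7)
The plan is to recognize the double sum as the Cauchy product of the two Neumann series $\sum_{l\ge 0}\bm{A}^{l}$ and $\sum_{m\ge 0}\bm{B}^{m}$. First I would swap the order of summation by setting $m=s-l$; the lattice triangle $\{(l,s):0\le l\le s\le t\}$ is identical to $\{(l,m):0\le l,\;0\le m,\;l+m\le t\}$, so
\[
\sum_{s=0}^{t}\sum_{l=0}^{s}\bm{A}^{l}\bm{B}^{\,s-l}
=\sum_{l=0}^{t}\bm{A}^{l}\Bigl(\sum_{m=0}^{t-l}\bm{B}^{m}\Bigr).
\]

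Next, I would pass to the limit $t\to\infty$. Since $\rho(\bm{A}),\rho(\bm{B})<1$, a standard consequence of Gelfand's formula supplies a submultiplicative matrix norm in which $\|\bm{A}\|,\|\bm{B}\|<1$, so $\|\bm{A}^{l}\|$ and $\|\bm{B}^{m}\|$ decay geometrically and the Neumann series $\sum_{l\ge 0}\bm{A}^{l}=(\bm{I}-\bm{A})^{-1}$ and $\sum_{m\ge 0}\bm{B}^{m}=(\bm{I}-\bm{B})^{-1}$ converge in norm. Combined with the entrywise nonnegativity of every term $\bm{A}^{l}\bm{B}^{m}$, this allows me to invoke the nonnegative form of Fubini's theorem (equivalently, monotone convergence applied entrywise) to evaluate the double limit as the iterated limit. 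For each fixed $l$ the inner sum converges monotonically to $(\bm{I}-\bm{B})^{-1}$, and then the outer sum converges to $(\bm{I}-\bm{A})^{-1}$, yielding $L=(\bm{I}-\bm{A})^{-1}(\bm{I}-\bm{B})^{-1}$.

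The only real technical point, and what I expect to be the main obstacle, is the justification for interchanging the two limits, since the inner truncation $\sum_{m=0}^{t-l}\bm{B}^{m}$ depends on $l$. I would handle this by a dominated-convergence bound: the geometric decay $\|\bm{A}^{l}\|\le C\tau^{l}$ with $\tau<1$ provides a summable majorant $C\tau^{l}\|(\bm{I}-\bm{B})^{-1}\|$ for the $l$-th summand uniformly in $t$, so the Weierstrass $M$-test gives uniform convergence of the outer series and the limits interchange legitimately. Finally, I would emphasize that at no step are $\bm{A}$ and $\bm{B}$ commuted, so the left-to-right factor order $(\bm{I}-\bm{A})^{-1}(\bm{I}-\bm{B})^{-1}$ in the answer faithfully mirrors the order $\bm{A}^{l}\bm{B}^{\,s-l}$ inside the original sum, which is essential since $\bm{A}$ and $\bm{B}$ in the intended application (with $\bm{A}=\bm{M}_{-P}$-type and $\bm{B}$ replaced by the transition terms) do not generally commute.
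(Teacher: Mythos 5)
Your proof is correct, but it takes a genuinely different route from the paper's. You re-index the triangular sum as a Cauchy product, $\sum_{s=0}^{t}\sum_{l=0}^{s}\bm{A}^{l}\bm{B}^{s-l}=\sum_{l=0}^{t}\bm{A}^{l}\bigl(\sum_{m=0}^{t-l}\bm{B}^{m}\bigr)$, and then pass to the limit via a Tannery/dominated-convergence argument (equivalently, entrywise monotone convergence using nonnegativity), with the geometric decay of $\|\bm{A}^{l}\|$ supplying the summable majorant. The paper instead keeps the triangular sum as is and runs a two-sided squeeze: it bounds the sum below by $\sum_{l=0}^{t_0}\bm{A}^{l}\bm{C}_{t-t_0}+\bm{A}^{t_0+1}\sum_{l=0}^{t-t_0-1}\bm{A}^{l}$ and above by a companion expression, takes $t\to\infty$ for fixed $t_0$, then $t_0\to\infty$, obtaining $(\bm{I}-\bm{A})^{-1}(\bm{I}-\bm{B})^{-1}$ from both sides. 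Your version is the more standard and arguably cleaner argument, and it makes the role of nonnegativity optional (geometric norm decay alone suffices), whereas the paper's squeeze leans on the entrywise ordering of nonnegative matrices; both correctly preserve the noncommutative factor order. One small inaccuracy to fix: Gelfand's formula does \emph{not} in general supply a \emph{single} submultiplicative norm in which both $\|\bm{A}\|<1$ and $\|\bm{B}\|<1$ (the joint spectral radius of two matrices each with $\rho<1$ can exceed $1$); what you actually need and do use is only that in some fixed norm $\|\bm{A}^{l}\|\le C\tau^{l}$ and $\|\bm{B}^{m}\|\le C'\mu^{m}$ with $\tau,\mu<1$, which follows from Gelfand applied to each matrix separately, so the argument stands after rephrasing that sentence.
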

\begin{proof}
Denote $\bm{C}_{l}=\bm{I}+\bm{B}+\cdots+\bm{B}^{l}$ for $l=0,1,2,\ldots$. For $t>t_0$ with fixed $t_0\in\mathbb{N}^\star$ we have $\sum_{s=0}^{t}\sum_{l=0}^{s}\bm{A}^{l}\bm{B}^{\,s-l}
\geq \sum_{l=0}^{t_{0}}\bm{A}^l\bm{C}_{t-t_{0}} + \bm{A}^{t_{0}+1}\sum_{l=0}^{t-t_0-1}\bm{A}^{l}$.
Fixing $t_0$ and letting $t\rightarrow\infty$ yields $L\geq (\bm{I}+\cdots+\bm{A}^{t_{0}})(\bm{I}-\bm{B})^{-1}+\bm{A}^{t_{0}+1}(\bm{I}-\bm{A})^{-1}$.
Now letting $t_0\rightarrow\infty$ gives $L\geq (\bm{I}-\bm{A})^{-1}(\bm{I}-\bm{B})^{-1}$.
Similarly, one may show the reverse bound by observing $\sum_{s=0}^{t}\sum_{l=0}^{s}\bm{A}^{l}\bm{B}^{\,s-l}
\leq \sum_{l=0}^{t_{0}}\bm{A}^{l}\bm{C}_{t} + \bm{A}^{t_{0}+1}\sum_{l=0}^{t-t_0-1}\bm{A}^{l}\bm{C}_{t-t_{0}-1}$,
and repeating the same limiting argument to obtain $L\leq (\bm{I}-\bm{A})^{-1}(\bm{I}-\bm{B})^{-1}$. Thus the lemma is proved. 
\end{proof}

Returning to Theorem \ref{lowbound}. We have $\bm{x}(t+1)=\bm{N}_{-P}\bm{x}(t)+\bm{W}\bm{y}(t)=\bm{N}_{-P}^{\,t+1}\bm{x}(0)+\sum_{l=0}^{t}\bm{N}_{-P}^{\,t-l}\bm{W}\bm{y}(l), \forall t\geq 0$. It is easy to see that $\bm{y}(t)\geq \bm{F}\bm{y}(t-1)$ with $\bm{F}=\bm{I}-\bm{W}-\bm{D}'$, it follows that $\bm{y}(t)\geq \bm{F}^{t}\bm{y}(0)$ for all $t\geq 0$. So
\begin{align*}
&\bm{m}(t+1)-\bm{m}(0)
= \bm{S}(0)\bm{B}_{-P}\Bigg(\sum_{s=1}^{t}\Big(\bm{N}_{-P}^{s}\bm{x}(0)
\\
&+ \sum_{l=0}^{s-1}\bm{N}_{-P}^{\,s-1-l}\bm{W}\bm{y}(l)\Big)+\bm{x}(0)\Bigg)\geq\\
&\bm{S}(0)\bm{B}_{-P}\sum_{l=0}^{t}\bm{N}_{-P}^{\,l}\bm{x}(0)+ \bm{S}(0)\bm{B}_{-P}\sum_{s=0}^{t-1}\sum_{l=0}^{s}\bm{N}_{-P}^{\,s-l}\bm{F}^{\,l}\bm{W}\bm{y}(0).
\end{align*} Setting $t\rightarrow\infty$ and applying Lemma \ref{splemma} yields $\bm{m}^\star-\bm{m}(0)
\geq \bm{S}(0)\bm{B}_{-P}\Big((\bm{I}-\bm{N}_{-P})^{-1}\bm{x}(0)
+(\bm{I}-\bm{N}_{-P})^{-1}(\bm{I}-\bm{F})^{-1}\bm{W}\bm{y}(0)\Big)$. Then, equivalently, $\|\bm{m}^\star-\bm{m}(0)\|_1
\geq \bm{1}^\top\bm{A}^{-1}\big(\bm{D}(\bm{I}-\bm{N}_{-P})^{-1}-\bm{I}\big)
\big(\bm{x}(0)+\bm{W}(\bm{W}+\bm{D}')^{-1}\bm{y}(0)\big)$.

The theorem is proved.

\begin{thebibliography}{00}
	\bibitem{Acemoglu2013}
	D. Acemoglu, A. Malekian, and A. Ozdaglar, ``Network security and contagion," J. of Economic Theory, 166, 2013, pp. 536--585.
	
	\bibitem{Albert2000}
	R. Albert, H. Jeong, and A.-L. Barabási,``Error and attack tolerance of complex networks,” Nature, 406.6794, 2000, pp. 378--382.
	
	\bibitem{Alpcan2010}
	T. Alpcan and T. Basar, Network security: A decision and game-theoretic	approach, Cambridge University Press, 2010.
	
	\bibitem{Bailey1975}
	N. T. Bailey, The mathematical theory of infectious diseases and its applications, Griffin, London, 1975.
	
	\bibitem{Ahn2013}
	H. J. Ahn, B. Hassibi, ``Global dynamics of epidemic spread over complex networks," Proceedings of the 52nd IEEE Conference on Decision and Control (CDC), IEEE, 2013, pp. 4579--4585.

    \bibitem{Barabasi2005}
    A. L. Barabasi, ``The origin of bursts and heavy tails in human dynamics," Nature, 435, 2005, pp. 207--211.
	
	\bibitem{Callaway2000}
	D. S. Callaway, M. E. J. Newman, S. H. Strogatz, D. J. Watts, ``Network robustness and fragility: Percolation on random graphs,” Physical Review letters, 85, 2000, pp. 5468--5471.
	
	\bibitem{Chen2014}
	W. Chen, C. Castillo, L. V. Lakshmanan, Information and influence propagation in social networks, Morgan \& Claypool Publishers, 2014. 
	
	\bibitem{Jacquet2010}
	P. Jacquet, B. Mans, and G. Rodolakis, ``Information propagation speed in mobile and delay tolerant networks," IEEE Transactions on Information Theory, 56 (10), 2010, pp. 5001--5015.
	
	\bibitem{Gracy2020}
	S. Gracy, P. E. Pare, H. Sandberg, K. H. Johansson, ``Analysis and distributed control of periodic epidemic processes," IEEE Trans. Control Netw. Syst., 8, 2021, pp. 123--134.
	
	\bibitem{Kempe2003}
	D. Kempe, J. M. Kleinberg, and E. Tardos, ``Maximizing the spread of influence through a social network," Proceedings of the 9th ACM SIGKDD International Conference on Knowledge Discovery and Data Mining (KDD), 2003, pp. 137--146.
	
	\bibitem{Kempe2015}
	D. Kempe, J. Kleinberg, E. Tardos, ``Maximizing the Spread of Influence through a Social Network," Theory of Computing, 11 (4), 2015, pp. 105--147.
	
	\bibitem{Kimura2008}
	M. Kimura, K. Saito, and H. Motoda, ``Solving the Contamination Minimization Problem on Networks for the Linear Threshold Model," PRICAI (Lecture Notes in Computer Science), 5351, 2008, pp. 977--984.
	
	\bibitem{Klepac2018}
	P. Klepac, S. Kissler, and J. Gog, ``Contagion! the bbc four pandemic–the model behind the documentary." Epidemics, 24, 2018, pp. 49--59.
	
	\bibitem{Lu2015}
	W. Lu, W. Chen, L. V. Lakshmanan, ``From Competition to Complementarity: Comparative Influence Diffusion and Maximization," Proc.
	VLDB Endow., 9 (2), 2015, pp. 60--71.
	
	\bibitem{Mieghem2009}
	P. Van Mieghem, ``Virus Spread in Networks," IEEE/ACM Transactions on Networking, 17 (1), 2009, pp. 1--14.
	
	\bibitem{Mieghem2011}
	P. Van Mieghem, Graph spectra for complex networks, Cambridge University Press, Cambridge, 2011.
	
	\bibitem{Mieghem2014}
	P. Van Mieghem, F. D. Sahnehz, C. Scoglioz, ``An upper bound for the epidemic threshold in exact Markovian SIR and SIS epidemics on networks," 53rd IEEE Conference on Decision and Control, Los Angeles, CA, USA, 2014, pp. 6228--6233.
	
	\bibitem{Nowzari2016}
	C. Nowzari, V. M. Preciado, G. J. Pappas, ``Analysis and control of epidemics: A survey of spreading processes on complex networks," IEEE Control Syst., 36, 2016, pp. 26--46.
	
	\bibitem{Pare2020}
	P. E. Pare, J. Liu, C. Beck, B. Kirwan, T. Basar, ``Analysis, Estimation, and Validation of Discrete-Time Epidemic Processes," IEEE Transactions on Control Syst. Tech., 28 (1), 2020, pp. 79--93. 
	
	\bibitem{Pastor-Satorras2015}
	R. Pastor-Satorras, C. Castellano, P. Van Mieghem, A. Vespignani, ``Epidemic processes in complex networks," Rev. Mod. Phys. 87, 2015, pp. 925--979.
	
	\bibitem{Canh2018}
	Pham, C.V., Thai, M.T., Duong, H.V., Bao. Q.B., Hoang. X.H., ``Maximizing misinformation restriction within time and budget constraints", J. Comb. Optim. 35, 2018, pp. 1202--1240.
	
	\bibitem{Canh2019}
	Pham, C.V., Phu, Q.V., Hoang, H.X., J. Pey, My. T. Thai, ``Minimum budget for misinformation blocking in online social networks," J. Comb. Optim. 38, 2019, pp. 1101--1127.
	
	\bibitem{Phelps2004}
	J. E. Phelps, R. Lewis, L. Mobilio, D. Perry, and N. Raman, ``Viral marketing	or electronic word-of-mouth advertising: Examining consumer responses and motivations to pass along email," Journal of Advertising research, 44 (04), 2004, pp. 333--384.

    \bibitem{Qi2018}
    J. Qi, X. Liang, Y. Wang, H. Cheng, ``Discrete time information diffusion in online social networks: micro and macro perspectives," Sci. Rep. 8, 2018, pp. 11872.
    
	\bibitem{Richardson2002}
	M. Richardson and P. Domingos, ``Mining knowledge-sharing sites for viral marketing," Proceedings of the 8th ACM SIGKDD international conference on Knowledge discovery and data mining, 2002, pp. 61--70.
	
	\bibitem{Ruhi2015}
	A. Ruhi, B. Hassibi, ``SIRS epidemics on complex networks: Concurrence of exact Markov chain and approximated models," Proceedings of the 54th IEEE Conference on Decision and Control (CDC), IEEE, 2015, pp. 2919--2926.
	
	\bibitem{Kermack1927}
	W. O. Kermack and A. G. McKendrick, ``A contribution to the mathematical theory of epidemics," Proceedings of the Royal Society of London A: Mathematical, Physical and Engineering Sciences, 115 (772), 1927, pp. 700–721.
	
	\bibitem{Shakarian2015}
	P. Shakarian, A. Bhatnagar, A. Aleali, E. Shaabani, R. Guo, Diffusion in Social Networks, Springer, 2015, xi+101 pp.
	
	\bibitem{Silva2016}
	T. C. Silva; L. Zhao, Machine learning in complex networks, Springer, Cham, 2016. xviii+331 pp.
	
	\bibitem{Xie2023}
	J. Xie, F. Zhang, K. Wang, X. Lin, and W. Zhang, ``Minimizing the Influence of Misinformation via Vertex Blocking," ICDE. IEEE, 2023, pp. 789--801.
	
	\bibitem{Wang2003}
	Y. Wang, D. Chakrabarti, C. Wang, and C. Faloutsos, ``Epidemic spreading in real networks: An eigenvalue viewpoint," Proc. 22nd Int. Symp. Reliable Distributed Systems (SRDS’03), 2003, pp. 25--34.
	
	\bibitem{Wang2017}
	Z. Wang, Y. Yang, J. Pei, L. Chu, E. Chen, ``Activity maximization by effective information diffusion in social net-works," IEEE Transactions on Knowledge and Data Engineering, 29.11 2017, pp. 2374–2387.
	
	\bibitem{Yi2022}
	Y. Yi, L. Shan, P. Pare, K. H. Johansson, ``Edge deletion algorithms for minimizing spread in SIR epidemic models," {SIAM Journal on Control and Optimization}, 60.2, 2022, pp. 246--273.
	
	\bibitem{Youssef2011}
	M. Youssef, C. Scoglio, ``An individual-based approach to SIR epidemics in contact networks," J. Theoret. Biol., 283, 2011, pp. 136-144.
\end{thebibliography}
\end{document}